\definecolor{purple}{rgb}{0.8,0,0.3}
\definecolor{deepblue}{rgb}{0.2,0,1}
\let\NAT@parse\undefined
\def\BibTeX{{\rm B\kern-.05em{\sc i\kern-.025em b}\kern-.08em
    T\kern-.1667em\lower.7ex\hbox{E}\kern-.125emX}}
\newcommand{\Qa}{\ensuremath{Q_\alpha}}
\newcommand{\Pa}{\ensuremath{P_\alpha}}
\DeclareMathOperator{\trace}{trace}
\DeclareMathOperator{\tr}{trace}
\newtheorem{theorem}{Theorem}
\newtheorem{lemma}{Lemma}
\newtheorem{remark}{Remark}
\begin{document}
\title{Optimal Control for Discrete-Time Systems under Bounded Disturbances}
\author{Egor Dogadin, Alexey Peregudin, Dmitriy Shirokih
\thanks{Egor Dogadin and Dmitriy Shirokih are with the Faculty of Control Systems and Robotics, ITMO University, Kronverkskiy av. 49, St. Petersburg, 197101, Russia (email: egor.dogadin@icloud.com; sher145@yandex.ru).}
\thanks{Alexey Peregudin is with the Faculty of Control Systems and Robotics, ITMO University, Kronverkskiy av. 49, St. Petersburg, 197101, Russia Russia (e-mail: peregudin@itmo.ru).}
\thanks{This work was supported by the Ministry of Science and Higher Education of Russian Federation according to the research project (Goszadanie) no.2019-0898.}
}

\maketitle

\begin{abstract}
This paper introduces a novel approach to the optimal control of linear discrete-time systems subject to bounded disturbances. 
Our approach is based on the newly established duality between ellipsoidal approximations of reachable and hardly observable sets. We provide exact solutions for state-feedback control and filtering problems, aligning with existing methods while offering improved computational efficiency. Moreover, our main contribution is the optimal solution to the output-feedback control problem for discrete-time systems which was not known before. Numerical simulations demonstrate the superiority of this result over previous sub-optimal ones.
\end{abstract}

\begin{IEEEkeywords}
Invariant ellipsoids, discrete systems, optimal control, output feedback, bounded disturbances.
\end{IEEEkeywords}

\section{Introduction}
\label{sec:introduction}

\IEEEPARstart{O}{ptimal} control of dynamic systems under bounded disturbances remains a central challenge in control theory. While well-established techniques like LQG, $\mathcal{H}_2$, and $\mathcal{H}_\infty$ control can effectively attenuate specific types of disturbances, such as Gaussian white noise or harmonic oscillations, addressing the general optimal control problem in the presence of bounded disturbances remains a complex task.

A significant advancement in this field came with the introduction of the invariant (attractive) ellipsoids method. This method allows for the analysis and controller design for systems subject to general bounded disturbances within the $L_\infty$-class. It provides a geometric interpretation and aims to minimize the size of the invariant set of the closed-loop system, reducing the effect of external disturbances. The foundations of the invariant ellipsoids method for discrete-time systems were initially laid out in \cite{b12-boyd}, \cite{b5-abedor}, and later developed in \cite{b1-nazin, b2-topunov, Khlebnikov2011}. The method is used for state-feedback control \cite{b1-nazin}, output-feedback control \cite{b2-topunov} and observer design \cite{Khlebnikov2011}. Its further development included network systems control \cite{x2}, adaptive control \cite{y1}, and robust control \cite{x6, x7, e5, e6}. Some of these results are summarized in the classical monograph \cite{b3-poznyak}.

Despite the widespread use of the invariant ellipsoids method, it is known to provide only sub-optimal solutions to the output-feedback control problem, as noted in \cite{Khlebnikov2011}. In this paper, we revisit the fundamental findings of the invariant ellipsoids method concerning state-feedback and filtering but take a novel approach to reestablish them. This approach will also enable us to derive the optimal solution for the output-feedback problem. At the core of our methodology lies the introduction of a new concept of hardly observable sets,  which are the dual counterparts to the reachable sets traditionally minimized using the invariant ellipsoids method.

The primary contributions of this paper are as follows:
\begin{itemize}
    \item We establish a duality between the ellipsoidal approximations of reachable and hardly observable sets for discrete-time systems, which leads to a definition of the $\varepsilon$-norm.
    \item For the first time, we present exact equations for determining optimal state-feedback and observer gains in discrete-time systems with respect to the $\varepsilon$-norm. These equations yield the same results as the optimization procedures proposed in \cite{b1-nazin, b2-topunov, Khlebnikov2011}, but are computationally more efficient due to a reduced number of variables.
    \item We introduce an optimal solution to the output-feedback control problem for discrete-time systems under bounded disturbances. This solution shows a significant improvement over prior sub-optimal results found in \cite{b2-topunov, Khlebnikov2011}.
\end{itemize}

The results presented in this paper extend the novel approach proposed in \cite{peregudin2023new} to the case of discrete-time systems.

\textbf{Notation.} 
 $\mathbb{Z}_{\ge 0}$ denote the non-negative integer numbers.  
 For $n$-dimensional signals $g: \mathbb{Z}_{\ge 0} \rightarrow \mathbb{R}^n$  and $p \ge 1$ we use the norms $\left\| g \right\|_p \doteq \left( \sum_{k=0}^{\infty} \left|g_{k} \right|^p   \right)^{1/p}$ and $\left\| g \right\|_\infty \! \doteq \max_{k=0}^{\infty} \left|g_{k} \right|$,
where $|g_{k}|^2 \doteq g^\top_{k} g^{\vphantom{\top}}_{k} $. The spectral radius of a square matrix $A$ is denoted as $\rho \! \left( A \right) \doteq \max_i \left| \lambda_i \! \left( A \right) \right|$, where $\lambda_i \! \left( A \right)$ is the $i$-th eigenvalue of the matrix. Matrix $A$ is said to be stable if $\rho \! \left( A \right) < 1$. The parallel configuration of  systems $\mathrm{S}_1$ and $\mathrm{S}_2$ is denoted as a sum $\mathrm{S}_1 + \mathrm{S}_2$. The series configuration of systems $\mathrm{S}_1$ and $\mathrm{S}_2$ is denoted as a right-to-left product $\mathrm{S}_2 \mathrm{S}_1$.

\section{The \texorpdfstring{$\varepsilon$}{}-norm for Discrete Systems}
Consider a discrete linear time-invariant system
\begin{equation}
    \mathrm{S}: \,
    \begin{cases}
        x_{k+1}=Ax_k+Bu_k, \\ 
        y_k = Cx_k,
    \end{cases}
    \label{DSOE}
\end{equation}
where $x_k \in \mathbb{R}^n$ is the state, $u_k \in \mathbb{R}^m$ is the control input, $y_k \in \mathbb{R}^p$ is the output, and $A$, $B$, $C$ are real matrices of corresponding sizes. Assume that $(A,B)$ is controllable, $(C,A)$ is observable, and $A$ is stable.

Define the reachable set as
\begin{equation*}
    \mathcal{R}_\infty \doteq \left\{ x_k \in \mathbb{R}^n \mid \mathrm{S}, \; x_0=0, \; \left\| u \right\|_\infty \leq 1  \right\}.
\end{equation*}
This set encompasses all states of the system \eqref{DSOE} that can be attained under arbitrary unit-bounded inputs. 
Ellipsoidal approximations of $\mathcal{R}_\infty$ are the subject of the invariant ellipsoids method \cite{b1-nazin, Khlebnikov2011}. The following theorem is known.

\begin{theorem} [\hspace{1sp}\cite{b1-nazin, Khlebnikov2011}]
\label{thm-P}
    If $\alpha \in \left(\rho^2 \! \left( A \right), 1 \right)$ and $\Pa \succ 0$ is a solution of the discrete Lyapunov equation
    \begin{equation}
        \frac{1}{\alpha}A\Pa A^\top -\Pa +\frac{1}{1-\alpha}BB^\top=0,
        \label{DLEEP}
    \end{equation}
    then $\mathcal{R}_\infty \subset \left\{ x \mid x^\top \Pa^{-1} x \leq 1 \right\}$.
\end{theorem}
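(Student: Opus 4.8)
The plan is to prove the stronger statement that the ellipsoid $\mathcal{E}_\alpha \doteq \left\{ x \mid x^\top P_\alpha^{-1} x \leq 1 \right\}$ is invariant along every trajectory of \eqref{DSOE} that starts at the origin and is driven by a unit-bounded input. Since, by definition, each point of $\mathcal{R}_\infty$ lies on such a trajectory, the inclusion $\mathcal{R}_\infty \subset \mathcal{E}_\alpha$ then follows. Because $P_\alpha \succ 0$, the function $V(x) \doteq x^\top P_\alpha^{-1} x$ is well defined and $V(0)=0\le 1$, so the trajectory starts inside $\mathcal{E}_\alpha$; the content of the proof is the one-step propagation of this membership.

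The crux is the dissipation inequality
\begin{equation*}
V(Ax + Bu) \leq \alpha\, V(x) + (1-\alpha)\, u^\top u \qquad \text{for all } x \in \mathbb{R}^n,\ u \in \mathbb{R}^m .
\end{equation*}
Granting it, a one-line induction on $k$ finishes the argument: if $x_0=0$ and $\left\| u \right\|_\infty \le 1$, then $V(x_0)=0\le 1$, and whenever $V(x_k)\le 1$ we have, using $u_k^\top u_k \le 1$, that $V(x_{k+1}) \le \alpha\cdot 1 + (1-\alpha)\cdot 1 = 1$. Hence $x_k \in \mathcal{E}_\alpha$ for every $k$, i.e. $\mathcal{R}_\infty \subset \mathcal{E}_\alpha$.

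To prove the inequality I would recast it as the matrix inequality
\begin{equation*}
\begin{bmatrix} A^\top \\ B^\top \end{bmatrix} P_\alpha^{-1} \begin{bmatrix} A & B \end{bmatrix} \preceq \begin{bmatrix} \alpha P_\alpha^{-1} & 0 \\ 0 & (1-\alpha) I \end{bmatrix}
\end{equation*}
and dispatch it by Schur complements. Writing $M \doteq [\,A\ \ B\,]$ and $D \doteq \operatorname{diag}\!\left(\alpha P_\alpha^{-1},\,(1-\alpha) I\right)$, which is positive definite because $0<\alpha<1$ and $P_\alpha\succ 0$, the desired $D - M^\top P_\alpha^{-1} M \succeq 0$ is equivalent to positive semidefiniteness of the bordered matrix $\bigl[\begin{smallmatrix} D & M^\top \\ M & P_\alpha \end{smallmatrix}\bigr]$, which is in turn equivalent to the complementary Schur condition $P_\alpha - M D^{-1} M^\top \succeq 0$. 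A direct computation gives $M D^{-1} M^\top = \tfrac{1}{\alpha} A P_\alpha A^\top + \tfrac{1}{1-\alpha} B B^\top$, so $P_\alpha - M D^{-1} M^\top$ is precisely minus the left-hand side of the Lyapunov equation \eqref{DLEEP}, hence equal to $0$. This closes the chain of equivalences and proves the estimate.

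I do not expect a serious obstacle: the one thing that must be gotten right is the reformulation, where the diagonal weights $\alpha$ and $1-\alpha$ in $D$ are chosen exactly so that $M D^{-1} M^\top$ reproduces the operator in \eqref{DLEEP}; everything else is routine manipulation. The hypothesis $\alpha \in \left(\rho^2(A),1\right)$ enters only through the standing assumption that a positive definite $P_\alpha$ exists, while the upper bound $\alpha<1$ is additionally used above to keep $D \succ 0$ and to make the induction close at the value $1$; the spectral radius plays no further role in the inclusion itself.
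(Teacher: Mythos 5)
Your proof is correct, but it is not the paper's argument: the paper deliberately gives an ``alternative'' proof that writes the solution explicitly as the series $P_\alpha=\sum_{i=0}^\infty \frac{\alpha^{-i}}{1-\alpha}A^iBB^\top (A^\top)^i$, expands $x_{k+1}=\sum_{i=0}^k A^{k-i}Bu_i$, and bounds $(c\,x_{k+1})^2$ directly via the Cauchy--Schwarz inequality with weights $\alpha^{\pm(k-i)}$, arriving at $x_kx_k^\top\preceq\|u\|_\infty^2 P_\alpha$ and then applying a Schur complement once. Your route is the classical invariant-ellipsoid argument: the dissipation inequality $V(Ax+Bu)\le\alpha V(x)+(1-\alpha)u^\top u$ for $V(x)=x^\top P_\alpha^{-1}x$, obtained from the Lyapunov equation by a two-sided Schur complement, followed by induction on $k$. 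Both are sound; your Schur-complement chain is valid because both $D\succ 0$ (from $0<\alpha<1$, $P_\alpha\succ 0$) and $P_\alpha\succ 0$ hold, and $MD^{-1}M^\top=\frac{1}{\alpha}AP_\alpha A^\top+\frac{1}{1-\alpha}BB^\top=P_\alpha$ exactly by \eqref{DLEEP}. What each buys: your version proves the genuinely stronger fact that the ellipsoid is forward-invariant from any interior initial condition (not merely that it contains $\mathcal{R}_\infty$ from $x_0=0$), and it correctly isolates the role of $\alpha>\rho^2(A)$ as purely an existence condition; the paper's Cauchy--Schwarz computation, by contrast, is chosen so that it mirrors, term for term, the proof of Theorem~2 on hardly observable sets, which is what makes the duality $\tr(CP_\alpha C^\top)=\tr(B^\top Q_\alpha B)$ and the subsequent $\varepsilon$-norm machinery transparent. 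So your proof is a fine self-contained substitute for Theorem~1, but it would not set up the dual result the way the authors intend.
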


We provide an alternative proof of Theorem \ref{thm-P} based on the Cauchy-Schwartz inequality. For the detailed proofs of this and all subsequent theorems, please refer to the \hyperlink{appendix}{Appendix}. 

Define the hardly observable set as
 \begin{equation*}
    \mathcal{O}_1 \doteq \left\{ x_0 \in \mathbb{R}^n \mid \mathrm{S}, \; u_k=0, \; \left\| y \right\|_1 \leq 1  \right\}.
\end{equation*}
This set comprises all initial states of  \eqref{DSOE} that lead to an output unit-bounded in the $1$-norm. 
The following theorem provides an ellipsoidal approximation of $\mathcal{O}_1$, which is a new result.
\begin{theorem}
    If $\alpha \in \left(\rho^2 \! \left( A \right), 1 \right)$ and $\Qa \succ 0$ is a solution of the discrete Lyapunov equation
    \begin{equation}
        \frac{1}{\alpha}A^\top \Qa A -\Qa +\frac{1}{1-\alpha}C^\top C=0,
        \label{DLEEQ}
    \end{equation}
    then $ \left\{ x \mid x^\top \Qa x \leq 1 \right\} \subset \mathcal{O}_1 $.
\end{theorem}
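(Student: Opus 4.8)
The plan is to run the dual of the Cauchy--Schwarz argument behind Theorem~\ref{thm-P}. Since the defining property of $\mathcal{O}_1$ only concerns the free response ($u_k=0$), the system reduces to $x_{k+1}=Ax_k$, $y_k=CA^kx_0$, so the statement is equivalent to the scalar estimate
\[
   \|y\|_1 = \sum_{k=0}^{\infty}\bigl|CA^kx_0\bigr| \;\le\; 1
   \qquad\text{whenever}\qquad x_0^\top \Qa x_0 \le 1 .
\]
First I would insert the weight $\alpha^{k/2}$ into each summand and apply Cauchy--Schwarz to the scalar sequences $(\alpha^{k/2})_k$ and $(\alpha^{-k/2}|y_k|)_k$:
\[
   \|y\|_1 = \sum_{k=0}^{\infty}\alpha^{k/2}\bigl(\alpha^{-k/2}|y_k|\bigr)
   \;\le\; \Bigl(\sum_{k=0}^{\infty}\alpha^{k}\Bigr)^{1/2}\Bigl(\sum_{k=0}^{\infty}\alpha^{-k}|y_k|^2\Bigr)^{1/2}.
\]
Because $\alpha\in(\rho^2(A),1)\subset(0,1)$, the first factor is the geometric sum $(1-\alpha)^{-1/2}$. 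For the second factor I would write $|y_k|^2 = x_0^\top (A^\top)^k C^\top C A^k x_0$, so that $\sum_{k}\alpha^{-k}|y_k|^2 = x_0^\top M x_0$ with $M\doteq\sum_{k=0}^{\infty}\alpha^{-k}(A^\top)^kC^\top C A^k$.

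The key step is to identify $M$ with $(1-\alpha)\Qa$. The series defining $M$ converges because the linear operator $X\mapsto \tfrac1\alpha A^\top X A$ has spectral radius $\rho^2(A)/\alpha<1$ under the hypothesis $\alpha>\rho^2(A)$; shifting the summation index then yields $\tfrac1\alpha A^\top M A - M + C^\top C = 0$, i.e. $(1-\alpha)^{-1}M$ satisfies exactly the Lyapunov equation~\eqref{DLEEQ}. Since that same contraction property makes the solution of~\eqref{DLEEQ} unique, we conclude $M=(1-\alpha)\Qa$. Substituting back,
\[
   \|y\|_1 \;\le\; (1-\alpha)^{-1/2}\bigl((1-\alpha)\,x_0^\top \Qa x_0\bigr)^{1/2} \;=\; \bigl(x_0^\top \Qa x_0\bigr)^{1/2}\;\le\;1
\]
on the ellipsoid $\{x : x^\top \Qa x\le1\}$, which is the claimed inclusion $\{x : x^\top\Qa x\le1\}\subset\mathcal{O}_1$.

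I expect the only genuinely delicate point to be the identification $M=(1-\alpha)\Qa$: one must carefully justify both the convergence of the matrix series and the uniqueness of the solution of~\eqref{DLEEQ}, both of which rest on the strict inequality $\alpha>\rho^2(A)$. The remaining ingredients are a single application of Cauchy--Schwarz and a geometric-series evaluation; the observability and positivity hypotheses are needed only to guarantee $\Qa\succ0$, i.e. that the sublevel set is a bounded ellipsoid, and play no role in the inequality itself.
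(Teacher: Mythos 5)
Your proposal is correct and is essentially the paper's own argument: both rest on the single Cauchy--Schwarz step with weights $\alpha^{\pm k/2}$ applied to the free response, together with the series representation $\Qa = \sum_{i=0}^{\infty}\frac{\alpha^{-i}}{1-\alpha}(A^\top)^i C^\top C A^i$; you merely write the inequality in the direction $\|y\|_1 \le (x_0^\top \Qa x_0)^{1/2}$ while the paper writes $x_0^\top \Qa x_0 \ge \|y\|_1^2$. Your extra care in justifying the convergence of the series and the uniqueness of the Lyapunov solution under $\alpha > \rho^2(A)$ is a welcome refinement of a step the paper asserts without proof, but it does not change the route.
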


Surprisingly, the set $\mathcal{R}_\infty$ is dual to the set $\mathcal{O}_1$ and not to $\mathcal{O}_\infty$, which would have a similar definition but with the bound $\|y\|_\infty \le 1$. Moreover, the ellipsoidal approximations of reachable sets are external, while nearly unobservable sets are approximated from the inside. Continuous systems' primary investigation of these dual relations can be found in \cite{peregudin2023new}.

The main focus of this paper is the minimization of the reachable set of a closed-loop system, thereby ensuring the system's robust attenuation of a large class of bounded disturbances. To achieve this goal, we introduce the concept of the $\varepsilon(\alpha)$-norm of the system \eqref{DSOE} that quantifies the size of the ellipsoidal approximation of the reachable set. To be precise, its square equals to the sum of the squares of the semi-axes of the ellipsoid $\left\{ y \mid y^\top ( C \Pa C^\top )^{-1}y \le 1 \right\}$, which is the image of $\left\{ x \mid x^\top \Pa^{-1} x \le 1 \right\}$ under the linear mapping $x \mapsto y = Cx$. Notably, the value of the $\varepsilon(\alpha)$-norm is also associated with the size of the inner ellipsoidal approximation of the hardly observable set, as established by the following theorem.


\begin{theorem}
    If $\alpha \in \left(\rho^2 \! \left( A \right), 1 \right)$ and $\Pa$, $\Qa \succ 0$ are the solutions of \eqref{DLEEP} and \eqref{DLEEQ}, then 
    \begin{equation*}
        \left\| \mathrm{S} \right\|^2_{\varepsilon(\alpha)} \doteq \tr \left( C\Pa C^\top \right)= \tr \left( B^\top \Qa B \right).
        \label{EAND}
    \end{equation*}
        \vspace{-\baselineskip}
\end{theorem}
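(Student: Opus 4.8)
The plan is to show that both traces equal the same series, namely $\sum_{k=0}^{\infty} \alpha^{-k}(1-\alpha)^{-1}\tr(CA^kBB^\top(A^\top)^kC^\top)$ (equivalently, with the roles of $B$ and $C$ exchanged inside the trace), and then invoke the cyclic property of the trace. First I would solve the discrete Lyapunov equation \eqref{DLEEP} explicitly: since $\rho^2(A)<\alpha<1$ the matrix $\alpha^{-1/2}A$ is stable, so \eqref{DLEEP} has the unique solution $\Pa=\frac{1}{1-\alpha}\sum_{k=0}^{\infty}\alpha^{-k}A^kBB^\top(A^\top)^k$, and symmetrically \eqref{DLEEQ} gives $\Qa=\frac{1}{1-\alpha}\sum_{k=0}^{\infty}\alpha^{-k}(A^\top)^kC^\top C A^k$. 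These series converge because $\alpha^{-k}\|A^k\|^2\to 0$ geometrically.

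Next I would substitute these closed forms into the two traces. We get
\begin{equation*}
\tr\!\left(C\Pa C^\top\right)=\frac{1}{1-\alpha}\sum_{k=0}^{\infty}\alpha^{-k}\tr\!\left(CA^kBB^\top(A^\top)^kC^\top\right),
\end{equation*}
and likewise
\begin{equation*}
\tr\!\left(B^\top\Qa B\right)=\frac{1}{1-\alpha}\sum_{k=0}^{\infty}\alpha^{-k}\tr\!\left(B^\top(A^\top)^kC^\top C A^k B\right).
\end{equation*}
For each fixed $k$, writing $M_k=CA^kB$, the first summand is $\tr(M_kM_k^\top)$ and the second is $\tr(M_k^\top M_k)$, which are equal. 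Hence the two series agree term by term, and since both converge absolutely we may conclude $\tr(C\Pa C^\top)=\tr(B^\top\Qa B)$, which is exactly the claimed identity; the common value is then taken as the definition of $\|\mathrm S\|^2_{\varepsilon(\alpha)}$.

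I do not expect a serious obstacle here; the only point requiring a little care is justifying that the Neumann-type series for $\Pa$ and $\Qa$ converge and indeed solve \eqref{DLEEP}–\eqref{DLEEQ} (uniqueness of the solution to a discrete Lyapunov equation with a stable coefficient matrix is standard), and that termwise manipulation of the trace is legitimate because the series of nonnegative scalars $\alpha^{-k}\tr(M_kM_k^\top)$ converges. An alternative, slightly slicker route that avoids infinite series altogether is to multiply \eqref{DLEEP} by $\Qa$ and \eqref{DLEEQ} by $\Pa$, take traces, and use $\tr(A^\top\Qa A\,\Pa)=\tr(\Qa A\Pa A^\top)$ together with the cyclic property to cancel the coupled terms; this yields $\frac{1}{1-\alpha}\tr(\Qa BB^\top)=\frac{1}{1-\alpha}\tr(\Pa C^\top C)$ directly. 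I would likely present this second argument as the main proof, since it is purely algebraic, and mention the series representation as motivation.
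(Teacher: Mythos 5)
Your proposal is correct, and it actually contains two valid proofs. The first one --- substituting the series solutions $\Pa=\frac{1}{1-\alpha}\sum_{k\ge 0}\alpha^{-k}A^kBB^\top(A^\top)^k$ and $\Qa=\frac{1}{1-\alpha}\sum_{k\ge 0}\alpha^{-k}(A^\top)^kC^\top CA^k$ and matching the two series term by term via $\tr(M_kM_k^\top)=\tr(M_k^\top M_k)$ with $M_k=CA^kB$ --- is essentially identical to the proof in the paper. The alternative you say you would present as the main argument is genuinely different and works: multiplying \eqref{DLEEP} by $\Qa$, multiplying \eqref{DLEEQ} by $\Pa$, taking traces, and using $\tr\!\left(A\Pa A^\top\Qa\right)=\tr\!\left(A^\top\Qa A\,\Pa\right)$ and $\tr(\Pa\Qa)=\tr(\Qa\Pa)$ cancels the coupled terms and leaves $\tr\!\left(\Qa BB^\top\right)=\tr\!\left(\Pa C^\top C\right)$ directly. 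This purely algebraic route is arguably cleaner: it needs only existence of the (positive definite) solutions and the cyclic property, and sidesteps any discussion of convergence or termwise manipulation of infinite series. What it does not give you, and what the paper's series form does, is the explicit interpretation of the common value as $\frac{1}{1-\alpha}\sum_{k\ge 0}\alpha^{-k}\lvert CA^kB\rvert^2$-type sums, which is the form used elsewhere in the paper's arguments (e.g., in the proofs of Theorems 1 and 2); so your plan of presenting the algebraic identity as the proof and the series as motivation is a sensible trade-off.
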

    With this duality, we also  introduce a parameter-independent $\varepsilon$-norm for the system \eqref{DSOE}. This norm corresponds to the tightest ellipsoidal approximations achievable by \eqref{DLEEP}-\eqref{DLEEQ} and is defined as the minimal 	attainable $\varepsilon(\alpha)$-norm:
    \begin{equation*}
        \left\| \mathrm{S} \right\|_{\varepsilon}\doteq \min_\alpha{\left\| \mathrm{S} \right\|}_{\varepsilon(\alpha)}.
        \label{END}
    \end{equation*}
In the next section, we will present results on synthesis that are optimal with respect to the $\varepsilon$-norm of the closed-loop system.

\section{Main Results}
\subsection{State-Feedback. \texorpdfstring{$\varepsilon$}{}-Optimal Controller}

Consider a discrete linear time-invariant closed-loop system
\begin{equation*}
    \mathrm{S}_K: \,
    \begin{cases}
        x_{k+1}=Ax_k+B u_k+B_w w_k, \\
        z_k=Cx_k+Du_k, \\
        u_k=Kx_k.
    \end{cases}
\end{equation*}
where $x_k \in \mathbb{R}^n$ is the state, $u_k \in \mathbb{R}^m$ is the control input, $w_k \in \mathbb{R}^{\bar{m}}$ is the external disturbance such that $\left\| w \right\|_\infty \leq 1$, $z_k \in \mathbb{R}^p$ is the regulated output, $A$, $B$, $B_w$, $C$, $D$, $K$ are real matrices of corresponding sizes, and $K$ is the feedback gain to be designed. Assume that $\left(A, B \right)$ is stabilizable, $\left( C,  A \right)$ is observable, and $C^\top D =0$.

The goal is to determine the optimal gain $K$ that minimizes the $\varepsilon$-norm of the system, which can be expressed as
\begin{equation*}
    \left\| \mathrm{S}_K \right\|_\varepsilon \; \longrightarrow \; \min .
\end{equation*}
The following theorem gives the optimal feedback gain $K$ that minimizes the $\varepsilon(\alpha)$-norm of the system.
\begin{theorem}
    \label{thm-K}
    Let $\Qa \succ 0$ be the solution of the discrete Riccati equation
    \begin{multline}
    \frac{1}{\alpha}A^\top \Qa B \left( B^\top \Qa B +\frac{\alpha}{1-\alpha}D^\top D  \right)^{-1} B^\top \Qa A \\
    -\frac{1}{\alpha}A^\top \Qa A + \Qa -\frac{1}{1-\alpha}C^\top C = 0
    \label{DAREC}
    \end{multline}
    for some $0 < \alpha < 1$. If the feedback gain $K$ be given by
    \begin{equation}
        K=-\left( B^\top \Qa B +\frac{\alpha}{1-\alpha}D^\top D  \right)^{-1} B^\top \Qa A,
        \label{DAREK}
    \end{equation}
    then the $\varepsilon(\alpha)$-norm of the closed-loop system attains the minimum possible value of
    \begin{equation*}
        \left\| \mathrm{S}_K \right\|_{\varepsilon(\alpha)} = \sqrt{\tr \left( B^\top_w \Qa B_w \right)}.
    \end{equation*}
    \vspace{-\baselineskip}
\end{theorem}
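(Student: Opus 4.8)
\emph{Proof sketch.} Substituting $u_k = Kx_k$ reduces $\mathrm{S}_K$ to the closed loop $x_{k+1} = A_Kx_k + B_w w_k$, $z_k = C_Kx_k$, where $A_K \doteq A+BK$ and $C_K \doteq C+DK$. Call $K$ \emph{admissible} if $\rho(A_K)^2 < \alpha$; for non-admissible $K$ the $\varepsilon(\alpha)$-norm is not finite, hence not the minimum, so it suffices to work with admissible $K$. For such $K$, the duality theorem above, applied to the closed loop (with $B$ replaced by $B_w$ and $C$ by $C_K$), gives $\left\|\mathrm{S}_K\right\|^2_{\varepsilon(\alpha)} = \tr\!\bigl(B_w^\top Q_\alpha(K)B_w\bigr)$, where $Q_\alpha(K) \succ 0$ is the unique solution of
\begin{equation*}
\frac1\alpha A_K^\top Q_\alpha(K) A_K - Q_\alpha(K) + \frac1{1-\alpha}C_K^\top C_K = 0 .
\end{equation*}
Since $Q_1 \preceq Q_2$ implies $\tr(B_w^\top Q_1 B_w) \le \tr(B_w^\top Q_2 B_w)$, the theorem follows once we show $Q_\alpha(K) \succeq \Qa$ for every admissible $K$, with equality for the gain \eqref{DAREK} (here $\Qa = Q_\alpha(K^\star)$ with $K^\star$ as below).

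Unfolding the Lyapunov equation along trajectories and using $C^\top D = 0$ (so that $C_K^\top C_K = C^\top C + K^\top D^\top D K$ and $|z_k|^2 = |Cx_k|^2 + |Du_k|^2$ along the free motion $x_{k+1} = Ax_k + Bu_k$, $u_k = Kx_k$), one obtains, for admissible $K$,
\begin{equation*}
x_0^\top Q_\alpha(K) x_0 = \frac1{1-\alpha}\sum_{k\ge0}\alpha^{-k}|z_k|^2 , \qquad x_0 \in \mathbb{R}^n .
\end{equation*}
Thus $\tr(B_w^\top Q_\alpha(K)B_w)$ is $\tfrac1{1-\alpha}$ times the total $\alpha^{-k}$-weighted output energy of the free responses started at the columns of $B_w$, and minimizing over $K$ is a discounted linear--quadratic problem.

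The key step is a completion of squares with the given $\Qa$. Put $R \doteq B^\top\Qa B + \tfrac{\alpha}{1-\alpha}D^\top D \succ 0$ and note that the gain \eqref{DAREK} is $K^\star \doteq -R^{-1}B^\top\Qa A$. Rearranging \eqref{DAREC} yields the pointwise identity
\begin{equation*}
\frac1\alpha(Ax+Bu)^\top\Qa(Ax+Bu) - x^\top\Qa x = -\frac1{1-\alpha}\bigl(|Cx|^2+|Du|^2\bigr) + \frac1\alpha(u-K^\star x)^\top R\,(u-K^\star x)
\end{equation*}
for all $x,u$. Evaluating it along the closed loop and telescoping $\alpha^{-k}x_k^\top\Qa x_k$ from $0$ to $N-1$ gives
\begin{equation*}
\frac1{1-\alpha}\sum_{k=0}^{N-1}\alpha^{-k}|z_k|^2 = x_0^\top\Qa x_0 - \alpha^{-N}x_N^\top\Qa x_N + \sum_{k=0}^{N-1}\alpha^{-k-1}(u_k-K^\star x_k)^\top R\,(u_k-K^\star x_k) .
\end{equation*}
Since $\rho(A_K)^2 < \alpha$, the boundary term $\alpha^{-N}x_N^\top\Qa x_N$ tends to $0$; letting $N\to\infty$ and invoking the trajectory identity above,
\begin{equation*}
x_0^\top Q_\alpha(K) x_0 = x_0^\top\Qa x_0 + \sum_{k\ge0}\alpha^{-k-1}\bigl((K-K^\star)x_k\bigr)^\top R\,\bigl((K-K^\star)x_k\bigr) \ge x_0^\top\Qa x_0 ,
\end{equation*}
i.e. $Q_\alpha(K) \succeq \Qa$. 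For $K = K^\star$ every term on the right vanishes, so $Q_\alpha(K^\star) = \Qa$; moreover, since $\Qa$ is the stabilizing solution of \eqref{DAREC}, $\tfrac1{\sqrt\alpha}A_{K^\star}$ is Schur, i.e. $K^\star$ is admissible. Hence $\left\|\mathrm{S}_K\right\|^2_{\varepsilon(\alpha)} \ge \tr(B_w^\top\Qa B_w)$ for every $K$, with equality for the gain \eqref{DAREK}, which proves the claim.

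The only delicate point is the well-posedness underpinning the scheme: the stabilizing solution $\Qa \succ 0$ of \eqref{DAREC} must exist and render $\tfrac1{\sqrt\alpha}A_{K^\star}$ Schur (here $(A,B)$ stabilizable and $(C,A)$ observable are used), and the boundary term $\alpha^{-N}x_N^\top\Qa x_N$ must vanish in the limit, which is exactly why the comparison is run only over admissible gains. Everything else — the Lyapunov unfolding, the rearrangement of \eqref{DAREC}, and the telescoping — is routine algebra once the cross terms are removed via $C^\top D = 0$.
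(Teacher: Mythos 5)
Your proof is correct, and its algebraic core is the same as the paper's: completing the square in the closed-loop Lyapunov equation with the weight $R = B^\top \Qa B + \tfrac{\alpha}{1-\alpha}D^\top D$, which isolates a term quadratic in $K-K^\star$ and identifies the gain \eqref{DAREK}. Where you genuinely diverge is in how optimality is certified. The paper substitutes $A+BK$, $C+DK$ into \eqref{DLEEQ}, completes the square at the matrix level, and then appeals to the comparison theorem of \cite{RAN198863} to conclude that the minimal admissible solution is obtained when the square term vanishes. You instead prove the ordering $Q_\alpha(K) \succeq \Qa$ from scratch: you unfold the Lyapunov solution as a discounted output-energy sum, derive the pointwise identity from \eqref{DAREC}, and telescope $\alpha^{-k}x_k^\top \Qa x_k$ along trajectories to get $x_0^\top Q_\alpha(K)x_0 = x_0^\top \Qa x_0 + \sum_{k\ge 0}\alpha^{-k-1}\bigl|R^{1/2}(K-K^\star)x_k\bigr|^2$. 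This buys two things the paper leaves implicit: a self-contained, elementary proof of the comparison step (no external Riccati comparison theorem), and an explicit delineation of the class of gains over which the minimum is taken, namely $\rho(A+BK)^2 < \alpha$, together with the verification that the boundary term $\alpha^{-N}x_N^\top \Qa x_N$ vanishes exactly for such gains. Like the paper, you still take for granted the existence of a stabilizing solution of \eqref{DAREC} (so that $K^\star$ is itself admissible); both proofs treat this as standard, so I see no gap.
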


To find the optimal controller with respect to the $\varepsilon$-norm, the parameter $0 < \alpha < 1$ can be tuned iteratively to find the minimum of the function $\alpha \mapsto \tr \left( B^\top_w \Qa B_w \right)$. Numerical algorithms can be used to accelerate this process.

\begin{remark}
    Theorem \ref{thm-K} establishes an exact solution to the optimal control problem $ \left\| \mathrm{S}_K \right\|_\varepsilon \to \min$.  While the same matrices $\Qa$ and $K$ can be obtained using the optimization procedure proposed in \cite{b1-nazin}, this method requires solving a system of LMIs with $\frac{1}{2}n(n+1)+\frac{1}{2}m(m+1)+mn$ variables. In contrast, with the proposed approach one only has to solve a single Riccati equation with $\frac{1}{2}n(n+1)$ variables. Therefore, the proposed approach offers greater computational efficiency, particularly for high-order systems or those with numerous inputs. Even for second order examples (using standard \texttt{cvx} software and MATLAB function \texttt{are}), Riccati equation \eqref{DAREC} can be solved approximately 5-10 times faster than the system of LMIs presented \cite{b1-nazin}. The difference becomes particularly evident when iterating the parameter $\alpha$.
\end{remark}

\subsection{Filtering. \texorpdfstring{$\varepsilon$}{}-Optimal Observer}
Consider a discrete linear time-invariant plant
\begin{equation*}
    \mathrm{S}_L : \,
    \begin{cases}
        x_{k+1}=Ax_k+B w_k, \\
        y_k=Cx_k+Dw_k.
    \end{cases}
\end{equation*}
where $x_k \in \mathbb{R}^n$ is the state, $w_k \in \mathbb{R}^{\bar{m}}$ is the disturbance such that $\left\| w \right\|_\infty \leq 1$, $y_k \in \mathbb{R}^p$ is the measured output, $A$, $B$, $C$, $D$ are real matrices of corresponding sizes. Assume that $\left(C,  A\right)$ is detectable, $\left( A,  B\right)$ is controllable, and $B D^\top =0$.

Along with the plant, consider a linear observer
\begin{equation*}
    \mathrm{\hat{S}}_L : \,
    \begin{cases}
        \hat{x}_{k+1}=A\hat{x}_k+L\left(\hat{y}_k-y_k \right), \\
        \hat{y}_k=C\hat{x}_k,\\
        z_k=C_z \left(x_k - \hat{x}_k \right),
    \end{cases}
\end{equation*}
where $\hat{x}_k \in \mathbb{R}^n$ is the estimated state, $\hat{y}_k \in \mathbb{R}^p$ is the estimated output, $z_k \in \mathbb{R}^{\bar{p}}$ is the observer error, $C_z $, $L$ are real matrices of corresponding sizes, and $L$ is the observer gain to be designed.
The closed-loop system can be expressed as the product  $\hat{\mathrm{S}}_L \mathrm{S}_L$, in which the input is the  disturbance $w$ and the output is the observer error $z$.

The goal is to determine the optimal gain $L$ that minimizes the $\varepsilon$-norm of the system, which can be expressed as
\begin{equation*}
    \big\| \, \hat{\mathrm{S}}_L \mathrm{S}_L \big\|_\varepsilon \; \longrightarrow \; \min .
\end{equation*}

The following theorem gives the optimal observer gain $L$ that minimizes the $\varepsilon(\alpha)$-norm of the system.
\begin{theorem}
\label{thm-L}
    Let $\Pa \succ 0$ be the solution of the discrete Riccati equation
    \begin{multline}
        \frac{1}{\alpha}A \Pa C^\top \left( C \Pa C^\top +\frac{\alpha}{1-\alpha}DD^\top \right)^{-1} C \Pa A^\top \\- \frac{1}{\alpha}A \Pa A^\top + \Pa -\frac{1}{1-\alpha}B B^\top = 0
        \label{DAREO}
    \end{multline}
    for some $0 < \alpha < 1$. If the observer gain $L$ be given by
    \begin{equation}
        L=-A \Pa C^\top \left( C \Pa C^\top +\frac{\alpha}{1-\alpha}DD^\top   \right)^{-1},
        \label{DAREL}
    \end{equation}
    then the $\varepsilon(\alpha)$-norm of the closed-loop system attains the minimum possible value of
    \begin{equation*}
        \big\| \, \hat{\mathrm{S}}_L \mathrm{S}_L \big\|_{\varepsilon(\alpha)} = \sqrt{\tr \left( C_z \Pa C_z^\top \right)}.
    \end{equation*}
    \vspace{-\baselineskip}
\end{theorem}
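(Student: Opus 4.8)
The plan is to reduce the statement to the $\varepsilon(\alpha)$-norm calculus already available. First I would compute the dynamics of the estimation error $e_k \doteq x_k - \hat x_k$. Subtracting the observer from the plant and using $\hat y_k - y_k = -Ce_k - Dw_k$ gives
\begin{equation*}
  e_{k+1} = (A+LC)e_k + (B+LD)w_k, \qquad z_k = C_z e_k ,
\end{equation*}
so the map $\hat{\mathrm S}_L \mathrm S_L : w \mapsto z$ is the linear system with matrices $\bar A \doteq A+LC$, $\bar B \doteq B+LD$, $\bar C \doteq C_z$, and, since $BD^\top = 0$, with $\bar B \bar B^\top = BB^\top + LDD^\top L^\top$. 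Applying the identity $\|\mathrm S\|_{\varepsilon(\alpha)}^2 = \tr(C\Pa C^\top)$ to this system, for every $L$ with $\rho(A+LC) < \sqrt\alpha$ we get $\big\| \hat{\mathrm S}_L \mathrm S_L \big\|_{\varepsilon(\alpha)}^2 = \tr(C_z P_\alpha(L) C_z^\top)$, where $P_\alpha(L) \succeq 0$ is the unique solution of
\begin{equation*}
  \tfrac1\alpha (A+LC) P_\alpha(L) (A+LC)^\top - P_\alpha(L) + \tfrac1{1-\alpha}\big(BB^\top + LDD^\top L^\top\big) = 0 .
\end{equation*}
The problem thus becomes minimizing $\tr(C_z P_\alpha(L) C_z^\top)$ over stabilizing gains $L$.

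Second, I would prove $P_\alpha(L) \succeq \Pa$ for all such $L$. Write $R_\alpha \doteq C\Pa C^\top + \tfrac\alpha{1-\alpha}DD^\top$ (invertible under the standing assumptions) so that the gain \eqref{DAREL} is $L^\ast = -A\Pa C^\top R_\alpha^{-1}$. Completing the square in $L$ inside $\tfrac1\alpha(A+LC)\Pa(A+LC)^\top + \tfrac1{1-\alpha}LDD^\top L^\top$, using $\tfrac1\alpha C\Pa C^\top + \tfrac1{1-\alpha}DD^\top = \tfrac1\alpha R_\alpha$, and then substituting the Riccati equation \eqref{DAREO} in the form $\tfrac1\alpha A\Pa A^\top - \tfrac1\alpha A\Pa C^\top R_\alpha^{-1}C\Pa A^\top = \Pa - \tfrac1{1-\alpha}BB^\top$, I obtain the key identity
\begin{equation*}
  \tfrac1\alpha (A+LC)\Pa(A+LC)^\top - \Pa + \tfrac1{1-\alpha}\big(BB^\top + LDD^\top L^\top\big) = \tfrac1\alpha (L-L^\ast)R_\alpha(L-L^\ast)^\top \succeq 0 .
\end{equation*}
Subtracting the Lyapunov equation for $P_\alpha(L)$ from this identity yields $\tfrac1\alpha \bar A (P_\alpha(L)-\Pa)\bar A^\top - (P_\alpha(L)-\Pa) = -\tfrac1\alpha (L-L^\ast)R_\alpha(L-L^\ast)^\top \preceq 0$; since $\rho(\bar A/\sqrt\alpha) < 1$, the unique solution of this discrete Lyapunov equation is $P_\alpha(L)-\Pa = \sum_{j\ge 0}(\bar A/\sqrt\alpha)^j\,\tfrac1\alpha(L-L^\ast)R_\alpha(L-L^\ast)^\top (\bar A^\top/\sqrt\alpha)^j \succeq 0$. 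Hence $\tr(C_z P_\alpha(L) C_z^\top) \ge \tr(C_z \Pa C_z^\top)$ for every admissible $L$.

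Third, I would check tightness. Setting $L = L^\ast$ makes the right-hand side of the key identity vanish, so $\Pa$ solves the Lyapunov equation for $L^\ast$, and by uniqueness $P_\alpha(L^\ast) = \Pa$; consequently $\big\| \hat{\mathrm S}_L \mathrm S_L \big\|_{\varepsilon(\alpha)} = \sqrt{\tr(C_z\Pa C_z^\top)}$ at $L = L^\ast$. What still has to be verified is that $L^\ast$ is admissible, i.e. $\rho(A+L^\ast C) < \sqrt\alpha$, which is the standard stabilizing-solution property of the discrete Riccati equation \eqref{DAREO}: from the key identity with $L = L^\ast$ we get the Lyapunov inequality $\tfrac1\alpha (A+L^\ast C)\Pa(A+L^\ast C)^\top - \Pa \preceq 0$ with $\Pa \succ 0$, and detectability of $(C,A)$ upgrades this to a strict spectral bound. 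Combining the three steps gives $\min_L \big\| \hat{\mathrm S}_L \mathrm S_L \big\|_{\varepsilon(\alpha)} = \sqrt{\tr(C_z\Pa C_z^\top)}$, attained at the gain \eqref{DAREL}.

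The step I expect to be most delicate is not the completion-of-squares algebra but the well-posedness bookkeeping: that \eqref{DAREO} possesses a positive-definite stabilizing solution with $R_\alpha$ invertible, and that every competing $L$ is genuinely admissible so that $P_\alpha(L)$ exists as a positive semidefinite matrix. I would dispatch these exactly as in the proof of Theorem \ref{thm-K}, to which the present statement is dual under the transposition $(A,B,C,D,B_w,K)\mapsto(A^\top,C^\top,B^\top,D^\top,C_z^\top,L^\top)$; indeed, one could instead derive Theorem \ref{thm-L} directly from Theorem \ref{thm-K}, since the $\varepsilon(\alpha)$-norm is invariant under this transposition by the duality $\tr(C\Pa C^\top) = \tr(B^\top\Qa B)$.
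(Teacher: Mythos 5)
Your proof is correct and follows essentially the same route as the paper: derive the error dynamics $e_{k+1}=(A+LC)e_k+(B+LD)w_k$, use $BD^\top=0$, complete the square in $L$, and identify the minimizing gain \eqref{DAREL} with the Riccati solution \eqref{DAREO}. The only difference is that where the paper appeals to duality with Theorem \ref{thm-K} and a cited comparison theorem to conclude $P_\alpha(L)\succeq \Pa$, you prove that monotonicity explicitly via the Lyapunov difference equation for $P_\alpha(L)-\Pa$, which makes the argument more self-contained but is not a different method.
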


To find the optimal filter with respect to the $\varepsilon$-norm, the parameter $0 < \alpha < 1$ can be tuned iteratively to find the
minimum of the function $ \alpha \mapsto \tr \left( C_z \Pa C_z^\top \right)$. Numerical algorithms can be used to accelerate this process.

\begin{remark}
    Theorem \ref{thm-L} establishes an exact solution to the optimal control problem $\| \hat{\mathrm{S}}_L \mathrm{S}_L \|_\varepsilon \to \min$.  While the same matrices $\Pa$ and $K$ can be obtained using the optimization procedure proposed in \cite{Khlebnikov2011}, this method requires solving a system of LMIs with $n(n+1)+np$ variables. In contrast, with the proposed approach one only has to solve a single Riccati equation with $\frac{1}{2}n(n+1)$ variables. Therefore, the proposed approach offers greater computational efficiency, particularly for high-order systems or those with numerous measurable outputs. The difference becomes particularly evident when iterating the parameter $\alpha$.
\end{remark}

\subsection{Output-Feedback. \texorpdfstring{$\varepsilon$}{}-Optimal Controller}
\label{section2c}
Consider a discrete linear time-invariant plant
\begin{equation}
    \begin{cases}
        x_{k+1}=Ax_k+B_1 w_k +B_2 u_k, \\
        y_k = C_1 x_k + D_1 w_k, \\
        z_k = C_2 x_k + D_2 u_k.
    \end{cases}
    \label{FDSOE}
\end{equation}
where $x_k \in \mathbb{R}^n$ is the state, $u_k \in \mathbb{R}^m$ is the control input, $w_k \in \mathbb{R}^{\bar{m}}$ is the external disturbance such that $\left\| w \right\|_\infty \leq 1$, $y_k \in \mathbb{R}^p$ is the measured output, $z_k \in \mathbb{R}^p$ is the regulated output, $A$, $B_i$, $C_i$, $D_i$ are real matrices of corresponding sizes. Assume that $\left(A, \; B_2 \right)$ is stabilizable, $\left( C_2,  A \right)$ is observable, $\left( C_1,  A \right)$ is detectable, $\left( A,  B_1 \right)$ is controllable, $B_1D_1^\top =0$, and $C_2^\top D_2 =0$.

Along with the plant, consider a linear controller
\begin{equation}
    \begin{cases}
        \hat{x}_{k+1}=A\hat{x}_k+B_2 u_k+L\left(\hat{y}_k-y_k \right), \\
        \hat{y}_k=C_1\hat{x}_k, \\
        u_k=K \hat{x}_k.
    \end{cases} 
    \label{CDSOE}
\end{equation}
where $\hat{x}_k \in \mathbb{R}^n$ is the estimated state, $\hat{y}_k \in \mathbb{R}^p$ is the estimated output, $K$ and $L$ are controller gains to be determined. Denote the closed-loop system \eqref{FDSOE}-\eqref{CDSOE} as $\mathrm{S}_{KL}$, in which the input is the disturbance $w$ and the output is the regulated output $z$.

The goal is to find the optimal gains $K$ and $L$ that minimize the $\varepsilon$-norm of the system, which can be expressed as
\begin{equation*}
    \left\| \mathrm{S}_{KL} \right\|_\varepsilon \; \longrightarrow \; \min .
\end{equation*}

The following theorem establishes the optimal controller gains $K$ and $L$ that minimize the $\varepsilon(\alpha)$-norm of the system.
\begin{theorem}
\label{thm-KL}
    Let $\Pa, \Qa \succ 0$ be the solution of the discrete Riccati equations
    \begin{multline*}
        \frac{1}{\alpha}A \Pa C_1^\top \left( C_1 \Pa C_1^\top +\frac{\alpha}{1-\alpha}D_1D_1^\top \right)^{-1} C_1 \Pa A^\top \\ - \frac{1}{\alpha}A \Pa A^\top + \Pa -\frac{1}{1-\alpha}B_1 B_1^\top =0,
    \end{multline*}
    \vspace{-\baselineskip}
    \begin{multline*}
        \frac{1}{\alpha}A^\top \Qa B_2 \left( B_2^\top \Qa B_2 +\frac{\alpha}{1-\alpha}D_2^\top D_2  \right)^{-1} B_2^\top \Qa A \\ -\frac{1}{\alpha}A^\top \Qa A + \Qa -\frac{1}{1-\alpha}C_2^\top C_2 =0,
    \end{multline*}
    for some $0 < \alpha < 1$. If the controller gains be given by
     \begin{align}
        & K=-\left( B_2^\top \Qa B_2 +\frac{\alpha}{1-\alpha}D_2^\top D_2  \right)^{-1} B_2^\top \Qa A, \label{thmKL1}\\
        & L=-A \Pa C_1^\top \left( C_1 \Pa C_1^\top +\frac{\alpha}{1-\alpha}D_1D_1^\top   \right)^{-1}, \label{thmKL2}
    \end{align}
    then the $\varepsilon(\alpha)$-norm of the closed-loop system attains the minimum possible value of
    \begin{equation*}
        \left\| \mathrm{S}_{KL} \right\|_{\varepsilon(\alpha)} = \sqrt{\tr \left( B^\top_1 \Qa B_1 \right) + \tr \left( R K\Pa K^\top R^\top \right)}
    \end{equation*}
    where $R$ is such that $R^\top R =  \frac{1-\alpha}{\alpha} B_2^\top \Qa B_2 +D_2^\top D_2$.
\end{theorem}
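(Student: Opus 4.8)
The plan is to put the closed loop into an explicit block-triangular realization in the estimation-error coordinates and then read the $\varepsilon(\alpha)$-norm off the dual ($Q$-)form, splitting it into a state-feedback part and an estimation part.

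\emph{Setup.} Introduce $e_k = x_k - \hat x_k$. Substituting \eqref{CDSOE} into \eqref{FDSOE} and using $u_k = K\hat x_k = K(x_k-e_k)$, the closed loop $\mathrm S_{KL}$ becomes the system with state $(x_k,e_k)$ and matrices
\begin{equation*}
\mathcal A = \begin{bmatrix} A+B_2K & -B_2K\\ 0 & A+LC_1\end{bmatrix},\quad
\mathcal B = \begin{bmatrix} B_1\\ B_1+LD_1\end{bmatrix},\quad
\mathcal C = \begin{bmatrix} C_2+D_2K & -D_2K\end{bmatrix};
\end{equation*}
in particular the error subsystem $(A+LC_1,\,B_1+LD_1)$ is driven by $w$ alone, decoupled from $x$. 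By the duality $\|\mathrm S\|^2_{\varepsilon(\alpha)} = \tr(B^\top\Qa B) = \tr(C\Pa C^\top)$ established above, $\|\mathrm S_{KL}\|^2_{\varepsilon(\alpha)} = \tr(\mathcal B^\top\mathcal Q\mathcal B)$, where $\mathcal Q$, partitioned conformably into $\mathcal Q_{11},\mathcal Q_{12},\mathcal Q_{22}$, solves $\tfrac1\alpha\mathcal A^\top\mathcal Q\mathcal A - \mathcal Q + \tfrac1{1-\alpha}\mathcal C^\top\mathcal C = 0$.

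\emph{Achievability.} Since $C_2^\top D_2 = 0$ one has $(C_2+D_2K)^\top(C_2+D_2K) = C_2^\top C_2 + K^\top D_2^\top D_2 K$ and $(C_2+D_2K)^\top D_2 = K^\top D_2^\top D_2$. Plugging the gain \eqref{thmKL1} into the second Riccati equation of the statement, I would verify the two identities $\tfrac1\alpha(A{+}B_2K)^\top\Qa(A{+}B_2K) - \Qa + \tfrac1{1-\alpha}(C_2{+}D_2K)^\top(C_2{+}D_2K) = 0$ and $\tfrac1\alpha(A{+}B_2K)^\top\Qa B_2 = -\tfrac1{1-\alpha}K^\top D_2^\top D_2$. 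The first identity says $\mathcal Q_{11} = \Qa$ solves the $(1,1)$-block; combined with the second it makes the $(1,2)$-block hold with $\mathcal Q_{12} = 0$. The $(2,2)$-block then reduces to $\tfrac1\alpha(A{+}LC_1)^\top\mathcal Q_{22}(A{+}LC_1) - \mathcal Q_{22} + \tfrac1{1-\alpha}(RK)^\top(RK) = 0$, because $K^\top\big(\tfrac1\alpha B_2^\top\Qa B_2 + \tfrac1{1-\alpha}D_2^\top D_2\big)K = \tfrac1{1-\alpha}(RK)^\top(RK)$ by the choice of $R$. Hence $\mathcal Q$ is block-diagonal and $\|\mathrm S_{KL}\|^2_{\varepsilon(\alpha)} = \tr(B_1^\top\Qa B_1) + \tr\big((B_1{+}LD_1)^\top\mathcal Q_{22}(B_1{+}LD_1)\big)$. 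Applying the same duality to the auxiliary system with state matrix $A{+}LC_1$, input matrix $B_1{+}LD_1$ and output matrix $RK$ rewrites the second trace as $\tr(RK\,\Pi\,K^\top R^\top)$, where $\tfrac1\alpha(A{+}LC_1)\Pi(A{+}LC_1)^\top - \Pi + \tfrac1{1-\alpha}(B_1{+}LD_1)(B_1{+}LD_1)^\top = 0$; running the previous computation in dual form, now with $B_1D_1^\top = 0$, the gain \eqref{thmKL2} and the first Riccati equation of the statement, shows $\Pi = \Pa$, which gives the claimed value.

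\emph{Optimality and the main obstacle.} For arbitrary admissible gains write $K = K^\star+\Delta K$, $L = L^\star+\Delta L$, where $K^\star,L^\star$ are the gains \eqref{thmKL1}--\eqref{thmKL2}. The $(1,1)$-block of the corresponding $\mathcal Q$ is always the observability Gramian of the state-feedback loop, and the completion of squares underlying Theorem~\ref{thm-K} gives $\mathcal Q_{11} = \Qa + \Xi$ with $\Xi\succeq 0$ solving $\tfrac1\alpha(A{+}B_2K)^\top\Xi(A{+}B_2K) - \Xi + \tfrac1{1-\alpha}\Delta K^\top R^\top R\,\Delta K = 0$; dually, by Theorem~\ref{thm-L}, the error Gramian equals $\Pa + \Psi$ with $\Psi\succeq 0$ driven by $\Delta L$. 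The hard part is that $\mathcal Q_{12}$ is no longer zero, so one must show that the cross term $2\tr\big(B_1^\top\mathcal Q_{12}(B_1{+}LD_1)\big)$ cannot pull $\tr(\mathcal B^\top\mathcal Q\mathcal B)$ below $\tr(B_1^\top\Qa B_1)+\tr(RK\Pa K^\top R^\top)$; I would do this by regrouping $\tr(\mathcal B^\top\mathcal Q\mathcal B)$ as that value plus terms that are manifestly nonnegative and vanish iff $\Delta K = \Delta L = 0$. Equivalently, passing to the scaled realization $(\mathcal A/\sqrt\alpha,\ \mathcal B/\sqrt{1-\alpha},\ \mathcal C)$, whose $\mathcal H_2$-norm coincides with $\|\mathrm S_{KL}\|_{\varepsilon(\alpha)}$, one splits the closed-loop impulse response into a state-feedback component and an estimation-error component and shows they are $\ell_2$-orthogonal once $K = K^\star$: this orthogonality is exactly the spectral-factorization identity forced by the $Q$-Riccati equation together with $C_2^\top D_2 = 0$, and the dual statement handles $L$. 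Reproducing this $\mathcal H_2$ separation principle inside the $\varepsilon$-norm framework is the delicate step; the remainder is bookkeeping with the two Riccati equations of the statement.
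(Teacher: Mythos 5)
Your achievability computation is correct and is, in substance, the paper's own argument run inline on the $2\times 2$ block Lyapunov equation rather than through an auxiliary lemma. The identity $\tfrac1\alpha(A+B_2K)^\top\Qa B_2=-\tfrac1{1-\alpha}K^\top D_2^\top D_2$ forced by \eqref{thmKL1} is exactly what annihilates the off-diagonal block of the closed-loop observability Gramian, and your $(2,2)$-block equation with weight $R^\top R=\tfrac{1-\alpha}{\alpha}B_2^\top\Qa B_2+D_2^\top D_2$ is precisely Lemma \ref{LEMM} of the paper, which the authors prove by exhibiting the block-diagonal Gramian $\mathrm{diag}(\bar Q,\Qa)$ for the cascade $\mathrm{S}^*_K\mathrm{S}^*_L$. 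The final identification $\Pi=\Pa$ via the dual Riccati equation, \eqref{thmKL2} and $B_1D_1^\top=0$ is also right. So the stated value of $\left\|\mathrm{S}_{KL}\right\|_{\varepsilon(\alpha)}$ at the gains \eqref{thmKL1}--\eqref{thmKL2} is fully established by your argument.

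The genuine gap is exactly where you locate it: the claim that this value is the \emph{minimum} over all admissible $(K,L)$ is planned but not proved. For a suboptimal $K$ the cross-Gramian $\mathcal Q_{12}$ does not vanish, the squared $\varepsilon(\alpha)$-norm acquires the cross term $2\tr\bigl(B_1^\top\mathcal Q_{12}(B_1+LD_1)\bigr)$, and neither of your two proposed routes (regrouping into manifestly nonnegative terms in $\Delta K,\Delta L$, or a weighted-$\ell_2$ orthogonality of the state-feedback and estimation components of the impulse response) is actually carried out; ``I would do this by regrouping'' is a program, not a proof. For comparison, the paper handles this step by writing $z=(\mathrm{S}_K+\mathrm{S}^*_K\mathrm{S}^*_L)w$, invoking Lemma \ref{LEMM} to replace the cascade by $\mathrm{S}'_L$, asserting that the squared $\varepsilon(\alpha)$-norms add over this parallel connection, and then minimizing the two summands separately via Theorems \ref{thm-K} and \ref{thm-L}; that additivity is precisely the orthogonality you flag, and it too holds only once the Riccati structure of the optimal $K$ is in force. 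So your diagnosis of the delicate step is accurate --- it is in fact the least explicit step of the paper's own proof --- but your proposal, as written, does not close the lower bound, and the theorem's optimality claim remains unproved in your write-up until the cross term is shown to contribute a nonnegative excess vanishing only at $K=K^\star$, $L=L^\star$.
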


Note that finding the value of $R$ may be considered unnecessary, given the cyclic property of the trace. To find the optimal output-feedback controller with respect to the $\varepsilon$-norm, the parameter $0 < \alpha < 1$ can be tuned iteratively to minimize the function $\alpha \mapsto \tr \left( B^\top_1 \Qa B_1 \right) + \tr \left(  K\Pa K^\top R^\top R \right)$.

\begin{remark}
    Theorem \ref{thm-KL} establishes a new result that was previously unknown. In the subsequent section we will compare this result to existing ones. 
\end{remark}

\section{Comparison with known results}

We compare the newly established optimal output-feedback control result with the previously known sub-optimal methods \cite{b2-topunov} and \cite{Khlebnikov2011} through the following example.

Consider a discrete time-invariant plant \eqref{FDSOE} with matrices
\begin{alignat*}{9}
    &A &&= &&\begin{bmatrix} 0 & 1\\ -1 & 0 \end{bmatrix}, 
    &&B_1 &&= &&\begin{bmatrix} 8 & 0\\ 8 & 0 \end{bmatrix}, \quad
    &&B_2 &&= &&\begin{bmatrix} 4\\ 8 \end{bmatrix}, \\[0.2em]
    &C_1 &&= &&\begin{bmatrix} 2 & 0\\ -6 & -2 \end{bmatrix}, \quad 
    &&D_1 &&= &&\begin{bmatrix} 0 & 0\\ 0 & 2 \end{bmatrix}, && && &&\\[0.2em]
    &C_2 &&= &&\begin{bmatrix} 8 & 6\\ 6 & -4\\ 0 & 0 \end{bmatrix}, 
    &&D_2 &&= &&\begin{bmatrix} 0\\ 0\\ 4 \end{bmatrix}.  && && && 
\end{alignat*}
We apply the methods \cite{b2-topunov}, \cite{Khlebnikov2011} (which yield the same results) and the new method outlined in Section \ref{section2c}. The results are presented in Figures \ref{fig1}, \ref{fig2}, and Table \ref{tablo}.

\begin{figure}[!t]
    \centerline{\includegraphics[width=1.03\columnwidth]{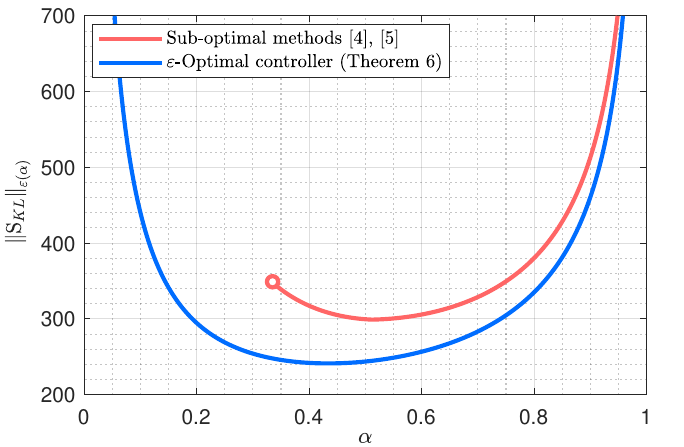}}
    \caption{A comparison of of the closed-loop system's $\varepsilon(\alpha)$-norm between the sub-optimal methods \cite{b2-topunov}, \cite{Khlebnikov2011} and the optimal controller proposed in Theorem \ref{thm-KL}. }
    \label{fig1}
\end{figure}

Figure \ref{fig1} illustrates the $\varepsilon(\alpha)$-norm as a function of $\alpha$ in the interval $(0, 1)$ for the two algorithms. Note that the sub-optimal methods \cite{b2-topunov}, \cite{Khlebnikov2011} do not yield solutions for small values of $\alpha$ due to degeneracy in the corresponding LMIs. In contrast, the proposed approach provides a solution for all values of $\alpha \in (0, 1)$, and the corresponding $\varepsilon(\alpha)$-norms are consistently smaller than those for the previously known methods.

\begin{table}[b]
    \vspace{-\baselineskip}
    \centering
    \caption{Comparison of the obtained output-feedback controllers}
    \label{tablo}
    \begin{tabular}{cc}
    \toprule
     \quad Sub-optimal methods \cite{b2-topunov}, \cite{Khlebnikov2011} \quad & \quad $\varepsilon$-Optimal controller (Theorem \ref{thm-KL}) \quad \\
    \midrule 
     $K = \begin{bmatrix} 0.0919 & -0.0661 \end{bmatrix}$ & $K = \begin{bmatrix}  0.0928 & -0.0643 \end{bmatrix}$ \\[0.3em]
     $L = \begin{bmatrix} 1.32 & 0.407 \\  0.681 & 0.0927 \end{bmatrix}$ & $L = \begin{bmatrix}  0.626 & 0.281 \\ 0.5 & 0 \end{bmatrix}$ \\
      \midrule 
   $\|\mathrm{S}_{KL}\|_\varepsilon=299.0$ & $\|\mathrm{S}_{KL}\|_\varepsilon=241.2$ \\
       \bottomrule
    \end{tabular}
    \label{table}
    \vspace{-\baselineskip}
\end{table}


In Figure \ref{fig2}, we visualize the invariant ellipsoids and discrete state trajectories for both methods.  The proposed approach yields a smaller ellipsoid, signifying improved attenuation of bounded disturbances. Note that the sizes of these ellipsoids are quantified by the values of the $\varepsilon$-norm.

Table \ref{tablo} displays the controller parameters and the corresponding $\varepsilon$-norm values obtained with both methods. It is evident that the new approach yields a smaller value of the $\varepsilon$-norm, demonstrating its superiority over the previously known methods. Moreover, according to Theorem \ref{thm-KL}, our result is proven to be optimal and cannot be improved.

\begin{figure}[!t]
    \centerline{\includegraphics[width=1.03\columnwidth]{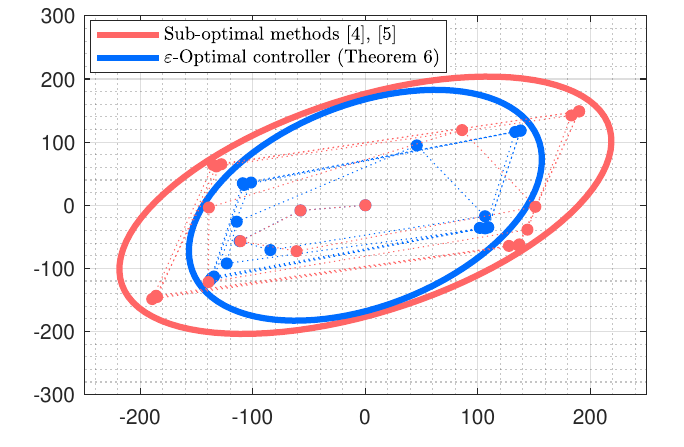}}
    \caption{A comparison of the closed-loop system's invariant ellipsoids (in 2D-projection) and sample state trajectories obtained with sub-optimal methods \cite{b2-topunov}, \cite{Khlebnikov2011} and the optimal method proposed in Theorem \ref{thm-KL}.}
    \label{fig2}
\end{figure}

\section{Conclusion}

This paper has established a duality relation between ellipsoidal approximations of reachable and hardly observable sets for discrete-time systems. By leveraging this duality, we have introduced a novel approach to addressing optimal state-feedback and filtering problems in the presence of bounded disturbances. Our primary contribution lies in providing a new solution to the output-feedback control problem for discrete-time systems subjected to bounded disturbances. This solution is not only proven optimal with respect to the $\varepsilon$-norm but also clearly demonstrated to outperform prior sub-optimal results.

\appendix

\hypertarget{appendix}{}
\label{appendix}

\subsection{Proof of Theorem 1}
\begin{proof}
For $\alpha \in (\rho^2 \! \left( A \right), 1 )$, the solution of equation \eqref{DLEEP} has the form $\Pa = \sum_{i=0}^\infty \frac{\alpha^{-i}}{\left(1-\alpha \right) }  A^i BB^\top \! \left(A^\top \right)^i$.
Let $c \in \mathbb{R}^{1 \times n}$ be an arbitrary nonzero row vector. Then for all $ k\in \mathbb{Z}_{\ge 0}$, 
\begin{equation*}
    \begin{aligned}
    cx_{k+1}  x^\top_{k+1}c^\top & = \left( \sum_{i=0}^k cA^{k-i}B u_i \right)^2 \\ &\le \left( \sum_{i=0}^k \alpha^{i-k} \left( cA^{k-i}B u_i \right)^2 \right) \left( \sum_{i=0}^k \alpha^{k-i} \right) \\  
    &\le \left( \sum_{i=0}^k \alpha^{i-k} \left( cA^{k-i}B u_i \right)^2 \right) \left( \sum_{i=-\infty}^k \alpha^{k-i} \right) \\
\end{aligned}
\end{equation*}
\begin{equation*}
    \begin{aligned}
    \phantom{cx_{k+1}  x^\top_{k+1}c^\top} 
    &= \sum_{i=0}^k \frac{\alpha^{i-k}}{\left(1-\alpha \right) }  \left( cA^{k-i}B u_i \right)^2 \\ 
    &\le \sum_{i=0}^k \frac{\alpha^{i-k}}{\left(1-\alpha \right) }  \left| cA^{k-i}B \right|^2 \left| u_i \right|^2 \\ 
    &\le \max_{i=0}^k \left| u_i \right|^2 \sum_{i=0}^k \frac{\alpha^{-i}}{\left(1-\alpha \right) }  cA^i BB^\top \! \left(A^\top \right)^i c^\top \\ 
    & \le \left\| u \right\|^2_\infty c \Pa c^\top,
\end{aligned}
\end{equation*}
where the first inequality holds by the Cauchy-Schwarz theorem. It can also be noted that since $k$ is arbitrary, we can substitute $x_k$ for $x_{k+1}$ without loss of generality.
Given that $c$ is nonzero vector, the inequality is equivalent to the LMI $x_k x^\top_k \preceq \left\| u \right\|^2_\infty \Pa$.
By applying the Schur complement property, we obtain $x^\top_k \Pa^{-1} x_k \le \left\| u \right\|^2_\infty$.
It follows that
\begin{equation*}
    \left\| u \right\|^2_\infty \le 1 \quad \Longrightarrow \quad x^\top_k \Pa^{-1} x_k \le 1.
\end{equation*} 
This completes the proof.
\end{proof}

\subsection{Proof of Theorem 2}
\begin{proof}
For $\alpha \in \left(\rho^2 \! \left( A \right), 1 \right)$, the solution of equation \eqref{DLEEQ} has the form $\Qa = \sum_{i=0}^\infty \frac{\alpha^{-i}}{\left( 1- \alpha \right) } \left( A^\top \right)^i C^\top C A^i$.
Let $x_0 \in \mathbb{R}^n$ be an arbitrary initial state. Then we have
\begin{align*}
    x_0^\top \Qa x_0 
    &= \sum_{i=0}^\infty \frac{\alpha^{-i}}{\left( 1- \alpha \right) } x_0^\top \left( A^\top \right)^i C^\top C A^i x_0 \\ 
    &= \sum_{i=0}^\infty \frac{\alpha^{-i}}{\left( 1- \alpha \right) } \left|C A^i x_0 \right|^2 \\ 
    &= \left( \sum_{i=0}^\infty \alpha^{-i} \left|C A^i x_0 \right|^2 \right) \left( \sum_{i=0}^\infty \alpha^i \right) \\ 
    &\ge \left( \sum_{i=0}^\infty  \left|C A^i x_0 \right| \right)^2 = \left\| y \right\|^2_1,
\end{align*}
where the inequality holds by the Cauchy-Schwarz theorem. It follows that
\begin{equation*}
    x_0^\top \Qa x_0 \le 1 \quad \Longrightarrow \quad \left\| y \right\|^2_1 \le 1.
\end{equation*}
This completes the proof.
\end{proof}
\subsection{Proof of Theorem 3}
\begin{proof}
Given the solutions of the equations \eqref{DLEEP} and \eqref{DLEEQ} it is straightforward to discern that
\begin{align*}
    \trace \left( C \Pa C^\top \right)  & = \trace \Big( \sum_{i=0}^\infty \frac{\alpha^{-i}}{\left(1-\alpha \right) } C A^i BB^\top \! (A^\top )^i C^\top \Big) 
    \\ &= \trace \Big( \sum_{i=0}^\infty \frac{\alpha^{-i}}{\left(1-\alpha \right) } B^\top \! (A^\top )^i C^\top C A^i B \Big)  
    \\ &= \trace \left( B^\top \Qa B \right) \vphantom{\Big(},
\end{align*}
establishing the desired duality.
\end{proof}

\subsection{Proof of Theorem 4}

\begin{proof}
Consider the closed-loop system
\begin{equation*}
    \mathrm{S}_K: \,
    \begin{cases}
        x_{k+1}=\left( A+BK \right)x_k+B_w w_k, \\ 
        z_k = \left( C+DK \right)x_k
    \end{cases}
\end{equation*}
and apply equation \eqref{DLEEQ} to obtain 
\begin{multline*}
    \frac{1}{\alpha} \left( A+BK \right)^\top \Qa \left( A+BK \right) - \Qa \\+ \frac{1}{1-\alpha} \left( C+DK \right)^\top \left( C+DK \right) = 0.
\end{multline*}
Expanding and taking into account $C^\top D=0$, we get
\begin{multline*}
    \frac{1}{\alpha}A^\top \Qa A - \Qa + \frac{1}{1-\alpha}C^\top C + \frac{1}{1-\alpha}K^\top D^\top D K 
    \\+ \frac{1}{\alpha} A^\top \Qa BK + \frac{1}{\alpha}K^\top B^\top \Qa A + \frac{1}{\alpha} K^\top B^\top \Qa BK = 0. 
\end{multline*}
Denoting $\Phi = \big( B^\top \Qa B +\frac{\alpha}{1-\alpha}D^\top D  \big)^{-1}$ and completing the squares yield 
\begin{multline*}
    \frac{1}{\alpha} \left(K + \Phi B^\top \Qa A \right)^\top \Phi^{-1}  \left(K + \Phi B^\top \Qa A \right) + \frac{1}{1-\alpha}C^\top C \\ + \frac{1}{\alpha}A^\top \Qa A - \Qa - \frac{1}{\alpha}A^\top \Qa B \Phi B^\top \Qa A  =0.
\end{multline*}
Observe that the first term of this equation is always positive-semidefinite. By the comparison theorem \cite{RAN198863}, the minimal possible $\Qa \succ 0$ for a given $\alpha$ is achieved if the first term minimizes to zero, which leads to \eqref{DAREK}. This simplifies the above expression into the discrete Riccati equation form \eqref{DAREC}.
With the assumptions on controllability and observability, this equation possesses a unique positive definite solution $\Qa $ that also satisfies the Lyapunov equation \eqref{DLEEQ}, thereby implying asymptotic stability of the closed-loop system. By definition of the $\varepsilon(\alpha)$-norm we have $\left\| \mathrm{S}_K \right\|_{\varepsilon(\alpha)}^2 = \tr \left( B^\top_w \Qa B_w \right)$.
\end{proof}
\subsection{Proof of Theorem 5}
\begin{proof}
Consider the discrete closed-loop system
\begin{equation*}
    \mathrm{S}_L: \,
    \begin{cases}
        e_{k+1}=\left( A+LC \right)e_k+ \left(B + LD \right) w_k, \\ 
        z_k = C_z e_k,
    \end{cases}
\end{equation*}
where $e_k=x_k - \hat{x}_k$, and apply equation \eqref{DLEEP} to obtain
\begin{multline*}
    \frac{1}{\alpha} \left( A +LC \right) \Pa \left( A+LC \right)^\top - \Pa \\+ \frac{1}{1-\alpha} \left( B+LD \right) \left( B+LD \right)^\top  = 0.
\end{multline*}
Given the dual nature of this theorem's proof to the one of Theorem \ref{thm-K}, it follows that the minimum possible $\Pa \succ 0$ for given $0<\alpha<1$ is reached if $L$ is selected as \eqref{DAREL}. This simplifies above expression into the discrete Riccati equation form \eqref{DAREO}, which has a unique stabilizing solution by the assumptions on controllability and observability. By definition of the $\varepsilon(\alpha)$-norm we have $\left\| \mathrm{S}_L \right\|_{\varepsilon(\alpha)}^2 = \tr \left( C_z \Pa C_z^\top \right)$.
\end{proof}

\subsection{Lemma 1}
\begin{lemma}
\label{LEMM}
    Let systems $\mathrm{S}_1$ and $\mathrm{S}_2$ be given as
    \begin{align*}
        \mathrm{S}_1&: \,
        \begin{cases}
            x_{k+1}= \bar{A} x_k + \bar{B} w_k, \\
            h_k= \bar{C} x_k
        \end{cases} \\
        \mathrm{S}_2&: \,
        \begin{cases}
            x_{k+1}= \left( A+BK \right)x_k+ B h_k,  \\
            z_k= \left( C +DK\right) x_k +D h_k,
        \end{cases}
    \end{align*}  
    where $K$ is calculated from \eqref{DAREC}-\eqref{DAREK}. Then the $\varepsilon(\alpha)$-norm of the series configuration of these systems can be computed as
    \begin{equation*}
        \left\| \mathrm{S}_2 \mathrm{S}_1 \right\|^2_{\varepsilon (\alpha)} = \left\| \mathrm{S}' \right\|^2_{\varepsilon (\alpha)},  
    \end{equation*}
    where
    \begin{equation*}
        \mathrm{S}': \,
        \begin{cases}
            x_{k+1}= \bar{A} x_k + \bar{B} w_k,   \\
            z_k= R \bar{C} x_k, 
        \end{cases}
    \end{equation*}
    and $R$ is such that $R^\top R =  \frac{1-\alpha}{\alpha} B^\top \Qa B +D^\top D$.
\end{lemma}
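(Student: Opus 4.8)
The plan is to compute both sides explicitly using the series-composition formula for the $\varepsilon(\alpha)$-norm and to show that the cross-terms produced by the series connection collapse into the quadratic form $R^\top R$ precisely because $K$ solves the Riccati equation \eqref{DAREC}-\eqref{DAREK}. First I would write down the state-space realization of the cascade $\mathrm{S}_2\mathrm{S}_1$: its state is the stacked vector $(x^{\mathrm{S}_1}_k, x^{\mathrm{S}_2}_k)$, its dynamics matrix is block lower-triangular with diagonal blocks $\bar A$ and $A+BK$ and off-diagonal block $B\bar C$, its input matrix is $(\bar B;\,0)$, and its output matrix is $(DK\bar C\ \text{coupling}$ plus $(C+DK)$ on the second block$)$ — more precisely $z_k = D\bar C\,x^{\mathrm{S}_1}_k \cdot(\text{via }h_k)$... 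I would be careful here: $z_k = (C+DK)x^{\mathrm{S}_2}_k + D h_k = (C+DK)x^{\mathrm{S}_2}_k + D\bar C x^{\mathrm{S}_1}_k$. Then I apply Theorem~3 in the form $\|\mathrm{S}_2\mathrm{S}_1\|^2_{\varepsilon(\alpha)} = \tr(C_{\mathrm{cl}}P_\alpha^{\mathrm{cl}} C_{\mathrm{cl}}^\top)$, where $P_\alpha^{\mathrm{cl}}$ solves the Lyapunov equation \eqref{DLEEP} for the cascade.

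The key step is to exploit the block structure. Partition $P_\alpha^{\mathrm{cl}}$ conformably into blocks $P_{11},P_{12},P_{22}$. Because the cascade's dynamics are block triangular and the input enters only the first subsystem, $P_{11}$ solves exactly the $\mathrm{S}_1$-Lyapunov equation $\frac1\alpha \bar A P_{11}\bar A^\top - P_{11} + \frac1{1-\alpha}\bar B\bar B^\top = 0$; this is the matrix that governs $\|\mathrm{S}'\|^2_{\varepsilon(\alpha)} = \tr(R\bar C P_{11}\bar C^\top R^\top)$. So the target identity reduces to showing
\begin{equation*}
    \tr\!\left(C_{\mathrm{cl}} P_\alpha^{\mathrm{cl}} C_{\mathrm{cl}}^\top\right) = \tr\!\left(R\bar C P_{11}\bar C^\top R^\top\right) = \tr\!\left(\bar C P_{11}\bar C^\top\,\big(\tfrac{1-\alpha}{\alpha}B^\top\Qa B + D^\top D\big)\right).
\end{equation*}
Expanding the left-hand trace produces four groups of terms: one in $P_{22}$ weighted by $(C+DK)^\top(C+DK)$, cross terms in $P_{12}$ weighted by $(C+DK)^\top D\bar C$, and one in $P_{11}$ weighted by $\bar C^\top D^\top D\bar C$. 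I would substitute the Lyapunov relations linking $P_{22}$ and $P_{12}$ back to $P_{11}$ (obtained from the off-diagonal and lower-right blocks of the cascade Lyapunov equation, which read like discrete Sylvester/Lyapunov equations driven by $B\bar C$), and then collect everything against $\bar C P_{11}\bar C^\top$. The Riccati identity for $K$ — equivalently, that $\frac1\alpha(A+BK)^\top\Qa(A+BK) - \Qa + \frac1{1-\alpha}(C+DK)^\top(C+DK) = 0$ together with the "completed-square" relation $\frac1{1-\alpha}(C+DK)^\top(C+DK) + \frac1\alpha(A+BK)^\top\Qa(A+BK)$ having $\Qa$ as its fixed point — is what forces the $P_{22}$ and $P_{12}$ contributions to telescope, leaving only the $B^\top\Qa B$ and $D^\top D$ weights claimed. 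Using $C^\top D = 0$ (inherited from Theorem~4's standing assumption) simplifies the cross terms substantially.

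The main obstacle I anticipate is bookkeeping the block Lyapunov equations correctly and verifying that the $P_{12}$ cross-terms cancel — this is where the specific algebraic form of $K$ in \eqref{DAREK} must be used, not merely stabilizability. A cleaner route, which I would try first, is to avoid the explicit $2n\times 2n$ Lyapunov equation: instead, use the series-norm formula from Theorem~3 in its "summed" form $\|\mathrm{S}_2\mathrm{S}_1\|^2_{\varepsilon(\alpha)} = \sum_{i\ge 0}\frac{\alpha^{-i}}{1-\alpha}\,|C_{\mathrm{cl}}A_{\mathrm{cl}}^i B_{\mathrm{cl}}|^2$, expand $A_{\mathrm{cl}}^i$ using its triangular structure into a sum over $(A+BK)^j B\bar C\,\bar A^{\,i-1-j}$ plus the pure $\bar A^i$ term, and recognize the inner double sum as the defining series for $\Qa$ contracted against the reachability Gramian-like series for $P_{11}$. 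Then the weight $\frac{1-\alpha}{\alpha}B^\top\Qa B + D^\top D = R^\top R$ appears as the natural "output gain" of the inner loop evaluated along the optimal $K$. Either way, the crux is the same Riccati cancellation; the series approach just makes the cancellation a consequence of resumming geometric-type series rather than of manipulating coupled matrix equations. I would present whichever is shorter, and expect the final identity to follow in a handful of lines once the structural expansion is in place.
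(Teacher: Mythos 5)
Your plan rests on the correct key identity but attacks the problem from the primal (reachability) side, whereas the paper works entirely on the dual side --- and the dual route is precisely the ``cleaner route'' you were hoping to find. The paper forms the cascade exactly as you do (block lower-triangular dynamics with diagonal blocks $\bar A$ and $A+BK$, input $\bigl[\begin{smallmatrix}\bar B\\0\end{smallmatrix}\bigr]$, output $\bigl[\,D\bar C \;\; C+DK\,\bigr]$), but then verifies the block-diagonal ansatz $Q'=\mathrm{diag}(\bar Q,\Qa)$ for the observability-type equation \eqref{DLEEQ} of the cascade. The $(2,2)$ block is the closed-loop Lyapunov identity for $\Qa$ already established in the proof of Theorem~4; the off-diagonal block reduces, via $C^\top D=0$, to $\bar C^\top\bigl(\tfrac{1}{\alpha}B^\top\Qa A+\tfrac{1}{\alpha}\bigl(B^\top\Qa B+\tfrac{\alpha}{1-\alpha}D^\top D\bigr)K\bigr)=0$, which vanishes exactly by \eqref{DAREK}; and the $(1,1)$ block then forces $\tfrac{1}{\alpha}\bar A^\top\bar Q\bar A-\bar Q+\tfrac{1}{1-\alpha}\bar C^\top R^\top R\,\bar C=0$. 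Because the cascade's input matrix has a zero second block, $\tr(B_{\mathrm{cl}}^\top Q'B_{\mathrm{cl}})=\tr(\bar B^\top\bar Q\bar B)$ and the lemma follows from the duality of Theorem~3 with no cross terms to track. Your primal version must instead carry the full $P_{12}$ and $P_{22}$ blocks, eliminate them through a Sylvester equation and the same Riccati identity, and you only assert --- rather than verify --- that the resulting telescoping closes. It does close, but that verification is the entire content of the lemma, so as written your argument is a plan rather than a proof. If you keep the primal route, you must write out the $(2,1)$ and $(2,2)$ blocks of \eqref{DLEEP} for the cascade and push the traces through explicitly; otherwise, switch to the dual Gramian, where the single gain identity from \eqref{DAREK} does all the work in one line.
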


\begin{proof}
The series connection $\mathrm{S}_2 \mathrm{S}_1$ can be expressed as 
\begin{align*}
    \left[ \begin{matrix} \bar{x}_{k+1} \\ x_{k+1} \end{matrix} \right] &= 
    \left[ \begin{matrix} \bar{A} & 0 \\B\bar{C} & A+BK \end{matrix} \right]
    \left[ \begin{matrix} \bar{x}_k \\ x_k \end{matrix} \right]  +
    \left[ \begin{matrix} \bar{B} \\ 0 \end{matrix} \right]  w_k, \\
    z_k &= \left[ \begin{matrix} D\bar{C} & C+DK\end{matrix} \right]
    \left[ \begin{matrix} \bar{x}_k \\ x_k \end{matrix} \right].
\end{align*}
It is straightforward to check, that
\begin{equation*}
    Q' = \left[ \begin{matrix} \bar{Q} & 0 \\ 0 & \Qa \end{matrix} \right]
\end{equation*}
is the solution of \eqref{DLEEQ}, where $\bar{Q}$ is calculated from 
\begin{equation*}
    \frac{1}{\alpha} \bar{A}^\top \bar{Q} \bar{A} - \bar{Q} +\frac{1}{1-\alpha} \bar{C}^\top R^\top R \bar{C} = 0,
\end{equation*}
where  $R$ is such that $R^\top R =  \frac{1-\alpha}{\alpha} B^\top \Qa B +D^\top D$.
Therefore, $\left\| \mathrm{S}_2 \mathrm{S}_1 \right\|^2_{\varepsilon (\alpha)} = \tr \left( \bar{B}^\top \bar{Q} \bar{B} \right)  = \left\| \mathrm{S}' \right\|^2_{\varepsilon (\alpha)}$, as desired.
\end{proof}

\subsection{Proof of Theorem 6}
\begin{proof}
Consider the following representation of the closed-loop system $\mathrm{S}_{KL}$:
\begin{align*}
    \left[ \begin{matrix} x_{k+1} \\ e_{k+1} \end{matrix} \right] &= 
    \left[ \begin{matrix} A+B_2K & -B_2K \\0 & A+LC_1 \end{matrix} \right]
    \left[ \begin{matrix} x_k \\ e_k \end{matrix} \right]  +
    \left[ \begin{matrix} B_1 \\ B_1+LD_1 \end{matrix} \right]  w_k, \\
    z_k &= \left[ \begin{matrix} \; C_2+D_2K & -D_2K \;\end{matrix} \right]
    \left[ \begin{matrix} x_k \\ e_k \end{matrix} \right].
\end{align*}
This system can be partitioned into three interconnected subsystems:
\begin{align*}
    \mathrm{S}_K&: \,
    \begin{cases}
        x_{k+1}= \left( A+B_2K \right)x_k+ B_1 w_k, \\
        z_k= \left( C_2 +D_2K\right) x_k,
    \end{cases} \\
    \mathrm{S}^*_L&: \,
    \begin{cases}
        e_{k+1}= \left( A+LC_1 \right)e_k+ \left( B_1+LD_1\right) w_k, \\
        h_k= -K e_k,
    \end{cases}\\
    \mathrm{S}^*_K&: \,
    \begin{cases}
        x_{k+1}= \left( A+B_2K \right)x_k+ B_2 h_k, \\
        z_k= \left( C_2 +D_2K\right) x_k +D_2 h_k.
    \end{cases}
\end{align*}
The vector $z$ of $\mathrm{S}_{KL}$ results from the parallel connection of $\mathrm{S}_K$ and the series connection of $\mathrm{S}^*_L$ and $\mathrm{S}^*_K$. Thus, it can be written as $z=\left(\mathrm{S}_{KL} \right)w =  \left(\mathrm{S}_K+ \mathrm{S}^*_K \mathrm{S}^*_L \right) w$.

By applying Lemma \ref{LEMM}, we conclude that
\begin{alignat*}{4}
    &\left\| \mathrm{S}_{KL} \right\|^2_{\varepsilon (\alpha)} &&= &&\left\| \mathrm{S}_{K} \right\|^2_{\varepsilon (\alpha)} + &&\left\| \mathrm{S}^*_K \mathrm{S}^*_L \right\|^2_{\varepsilon (\alpha)} \\
    & &&= &&\left\| \mathrm{S}_{K} \right\|^2_{\varepsilon (\alpha)} + &&\left\| \mathrm{S}'_L \right\|^2_{\varepsilon (\alpha)}, 
\end{alignat*}
where
\begin{equation*}
    \mathrm{S}'_L: \,
    \begin{cases}
        e_{k+1} = \left( A+LC_1 \right)e_k+ \left( B_1+LD_1\right) w_k, \\
        z_k= -R K e_k,
    \end{cases}
\end{equation*}
and $R$ is such that $R^\top R =  \frac{1-\alpha}{\alpha} B_2^\top \Qa B_2 +D_2^\top D_2$.

It is evident that system $\mathrm{S}_K$ corresponds to the system from Theorem \ref{thm-K}, while $\mathrm{S}'_L$ is equivalent to the one in Theorem \ref{thm-L}, and that the separation of norms is achieved. Therefore, the minimum value of the $\varepsilon(\alpha)$-norm for system $\mathrm{S}_{KL}$ is attained when the controller gains are given by \eqref{thmKL1}-\eqref{thmKL2}. It follows that the corresponding $\varepsilon(\alpha)$-norm can be expressed as
\begin{equation*}
    \left\| \mathrm{S}_{KL} \right\|^2_{\varepsilon(\alpha)} = \tr \left( B^\top_1 \Qa B_1 \right) + \tr \left( R K\Pa K^\top R^\top \right).
\end{equation*}
This completes the proof.
\end{proof}



\bibliographystyle{ieeetr}
\bibliography{file}


\end{document}